\documentclass[12pt]{article}
\usepackage{amssymb}
\usepackage{authblk}
\usepackage{amsmath}
\usepackage{bm}
\usepackage{latexsym}
\usepackage{mathrsfs}
\usepackage{amsthm}
\usepackage{graphics}
\usepackage{graphicx}
\usepackage{xcolor}
\usepackage{overpic}
\newtheorem{prop}{Proposition}

\usepackage{hyperref}
\begin{document}
\title{\bf   $T\bar{T}$ deformation and multiple-flavor Lorentzian threads}
\author{Mojtaba Shahbazi\thanks{Corresponding author: mojtaba.shahbazi@modares.ac.ir}}
\author{Mehdi Sadeghi\thanks{mehdi.sadeghi@abru.ac.ir}}
\affil{Department of Physics, Faculty of Basic Sciences, Ayatollah Boroujerdi University, Boroujerd, Iran}
\date{\today}
\maketitle

\abstract{This work is motivated by the proposed relationship among finite cut-off holography and generalized $T\bar T$ deformations, and examines holographic complexity within the framework of the "complexity = anything" proposal. Employing a Fefferman-Graham expansion near the finite cut-off surface, the deformation-induced correction to generalized complexity is derived and shown to allow a systematic expansion in terms of generalized Willmore-type functionals. The resulting formulation broadens earlier findings for the complexity-volume proposal to encompass arbitrary geometric complexity measures. Furthermore, the structure of the correction allows a natural interpretation as multiple-flavor Lorentzian threads, where distinct thread sectors correspond to different curvature invariants in the complexity functional. These results show a geometric connection among finite cut-off holography, generalized complexity, and the emergence of non-local computational structures in holographic quantum field theories.}\\
\section{Introduction} \label{intro}
Irrelevant deformations of quantum field theories offer a solid framework for exploring non-local dynamics while preserving analytic control over physical observables. Notably, the $T\bar T$ deformation and its generalizations have garnered significant attention due to their integrability properties \cite{deform,deform1,deform2} and their holographic interpretation as finite cut-off geometries \cite{fc,fc1}. From a holographic standpoint, these deformations provide a concrete context in which modifications to the ultraviolet structure of the boundary theory are represented geometrically within the bulk spacetime.

Given that $T\bar T$-type deformations alter the structure of the boundary stress tensor and introduce non-local interactions, it is pertinent to investigate how these modifications are manifested in the complexity of the dual quantum state.

Holographic complexity has become a significant tool for probing quantum information in gravitational systems. Computational complexity, defined as the minimum number of quantum gates required to transform an initial state into a specified final state, has been incorporated into holography through several proposals. The "complexity=volume" (CV) conjecture identifies the complexity of the boundary theory with the volume of Einstein-Rosen bridges \cite{cv}. The "complexity=action" (CA) proposal relates complexity to the action of the bulk theory restricted to the Wheeler-DeWitt patch \cite{ca}, while "complexity=volume 2.0" (CV2.0) associates it with the spacetime volume within the Wheeler-DeWitt patch \cite{cv2}. More recently, the "complexity=anything" framework generalizes complexity to the volume of an arbitrary functional \cite{general}, providing a broader context for examining the geometric origins of complexity and the influence of various bulk structures on boundary state preparation. In this work, we focus on cases involving codimension-one bulk regions.

\begin{align}\label{obser}
\mathcal{O}_{F_1,\Sigma_{F_2}}(\Sigma_{CFT})=\frac{1}{G_N L}\int_{\Sigma_{F_2}}\mathrm{d}^{d}\sigma \sqrt{h} F_1(g_{\mu\nu};X^{\mu}),
\end{align}

where $F_1$ can equal one, which leads to the volume of the hypersurface as in CV. It can be general scalar function of metric $g_{\mu\nu}$ and an embedding $X^{\mu}(\sigma^a)$ of the hypersurfaces. Furthermore, $\Sigma_{F_2}$ is a codimension-one hypersurface in the bulk spacetime with boundary time slice $\partial\Sigma_{F_2} =\Sigma_{CFT}$. Extremality of the hypersurface leads to

\begin{align}\label{var}
\delta_{X}\Big(\int_{\Sigma}\mathrm{d}^{d}\sigma \sqrt{h} F_2(g_{\mu\nu};X^{\mu})\Big)=0.
\end{align}
For the sake of simplicity, in this work we follow the case $F_1 =F_2$, so the observable \eqref{obser}, known as $\mathcal{C}_{Any}$, obeying the above condition is expressed by:

\begin{align}\label{maxv}
\mathcal{C}_{Any}(\tau)= \max_{\partial\Sigma(\tau)=\Sigma_{\tau}}\frac{V_x}{G_N L}\left[\int_{\Sigma}\mathrm{d}^{d}\sigma \sqrt{h} F_1(g_{\mu\nu};X^{\mu})\right],
\end{align}

where $h$ is the determinant of the induced metric on the given hypersurface.

The usual choice of generalization is

\begin{align}\label{F1F2}
F_1=F_2=1+\alpha L^4 C^2,
\end{align}
where $C^2$ is the Weyl tensor squared and for $\alpha=0$ the CV is recovered. However, the generalized function $F_1$ admits any bulk curvature invariant.

At the same time, recent investigations of generalized complexity have revealed a richer structure than previously anticipated. In particular, transitions in the complexity growth rate appear to be closely related to properties of the boundary energy-momentum tensor. These observations suggest that generalized complexity may encode information about distinct computational sectors of the underlying quantum circuit. Understanding how such structures are modified in the presence of non-local deformations is therefore of particular interest.

In this work, we study the effect of $T\bar T$-induced non-locality on generalized holographic complexity. Using a Fefferman-Graham (FG) expansion near a finite radial cut-off, we derive the corrections to the generalized complexity functional generated by the deformation. We show that the resulting expression can be organized as a series of generalized Willmore-type contributions whose coefficients are determined by the geometric invariants entering the complexity functional, where, for the case of CV and CA, it has been evaluated in \cite{faraj}.

We then investigate the Lorentzian-thread interpretation of these corrections. We argue that generalized complexity naturally admits a decomposition into multiple thread sectors, each associated with a distinct curvature invariant. From this perspective, the appearance of multiple flavors of Lorentzian threads may be viewed as a geometric manifestation of the non-local computational structure induced by the $T\bar T$ deformation. The Lorentzian thread reformulation has been considered in CV \cite{pedraza2,pedraza3}, complexity-anything \cite{comlor}, and the notion of multiple Lorentzian threads in CV \cite{pedraza1}; however, these thread reformulations have been considered in the dual convex program. In this work, we begin with complexity=anything and construct multiple-thread reformulations across different geometric sectors; i.e., we show that there exists a local construction for every positive curvature invariant and a global construction under some conditions.

The paper is organized as follows. In section \ref{sec2}, we review $T\bar T$ deformation and Willmore energy in CV; in section \ref{sec3}, we derive the deformation-induced corrections to generalized holographic complexity and discuss their geometric interpretation. In section \ref{sec4}, we reformulate the results in terms of Lorentzian threads and show how multiple-flavor thread configurations arise naturally. We conclude in section \ref{con} with a discussion of the implications of our results and possible future directions.

\section{Generalized complexity at finite cut-off}\label{sec2}
In this section we briefly review the finite cut-off geometry arising from the generalized root-$T\bar T$ deformation, following~\cite{faraj}. The resulting Fefferman-Graham expansion provides the geometric framework used throughout the remainder of this work. We summarize only the ingredients required for the derivation of the generalized holographic complexity functional presented in the next section.

The holographic dual of $T\bar T$ deformation theories is defined at a finite cut-off $r=r_c$ \footnote{Recently, it has been proposed that instead of duality of the bulk at a finite cut-off and CFT deformation, there is an alternative geometrical duality where the bulk theory at the boundary is dual to an (undeformed CFT)+(a massive gravity) \cite{modhol}.}, where $T$ is the energy-momentum tensor of the boundary theory \cite{modhol}
\begin{align}
S_{CFT}+\Gamma \int \sqrt{h} \sqrt{\frac{1}{2}T_{\mu\nu}T^{\mu\nu}-\frac{1}{4}T_{\mu}^{\mu}T_{\nu}^{\nu}}.
\end{align}
Recently, the holographic complexity of $T\bar{T}$-deformation has been computed in the CV and CA proposals where the boundary theory lives on a finite cut-off radius in the bulk \cite{faraj}. It is worth mentioning that our work considers the complexity of the bulk theory via "complexity=anything" proposal. 

Due to the presence of the energy tensor in the Lagrangian, the non-locality of $T(x,x')$ enters into the theory and as a consequence, it suggests the larger $\Gamma$, the parameter of the deformation, the larger non-locality of the boundary theory. This non-locality could be represented by the entangled quantum gates in quantum circuits where the far apart parts of the circuit be entangled. The generalized $T\bar T$ deformation modifies the holographic dictionary by introducing a finite radial cut-off. Consequently, bulk geometric observables become functions of the cut-off surface and admit an expansion in powers of the cut-off parameter.

The complexity functional \ref{maxv} for $\alpha=0$ (CV proposal) in FG coordinates at the finite cut-off $\rho_c$, and for its analogues at the boundary (the renormalized original theory) read respectively as \cite{faraj}
\begin{align}
\mathcal{C}^{T\bar{T}}_{V}&=\frac{1}{G_N}\int d^{d-1}y\int_{\rho_c}d\rho \sqrt{H},\label{vexp}\\
&=\frac{\rho_c^{\frac{1-d}{2}}}{(d-1)G_N}\Bigg(V_{\Sigma}-\frac{(d-1)\rho_c}{2(d-3)(d-2)}\int_{\Sigma}d^{d-1}y\sqrt{h}\Big(G_{nn}-\frac{d-2}{(d-1)^2}K^2\Big)\Bigg),\\
\mathcal{C}^{undeformed}_{V}&=\frac{\rho_c^{\frac{1-d}{2}}}{(d-1)G_N}\Bigg(V_{\Sigma}-\frac{(d-1)\rho_c}{2(d-3)(d-2)}\int_{\Sigma}d^{d-1}y\sqrt{h}\Big(G_{nn}-\frac{(d-2)^2}{(d-1)^2}K^2\Big)\Bigg),\label{undef}
\end{align}
where $\sqrt{H}$ the determinant of the hypersurface, $y_a$ coordinates on the hypersurface $\Sigma$ at the finite cut-off, $\rho_c=\frac{1}{r_c}$, $G_{nn}$ the normal-normal component of the Einstein tensor and $K=g_{\mu\nu}K^{\mu\nu}$ the trace of the extrinsic curvature. One could evaluate how it is difficult to deform a given quantum state by $T\bar{T}$ deformation as follows \cite{faraj}
\begin{align}\label{subtract}
\delta \mathcal{C}_V&=\mathcal{C}^{undeformed}_V-\mathcal{C}^{T\bar{T}}_V=\frac{\rho_c^{\frac{3-d}{2}}}{G_N}W_{\Sigma},
\end{align}
where $W_{\Sigma}$ is the Willmore energy, which is the bending energy of hypersurface $\Sigma$:
\begin{align}
W_{\Sigma}=\frac{1}{2(d-1)^2}\int_{\Sigma} d^{d-1}y \sqrt{h}K^2.
\end{align}
It should be noted that by Gauss-Codazzi identity, complexities have been computed could be written entirely in terms of the extrinsic curvature of the hypersurface and its Euler number \cite{faraj}. 

The results summarized above follow directly from \cite{faraj} and serve as the geometric input for our analysis. The finite cut-off geometry reviewed above provides the essential ingredients for evaluating generalized geometric functionals. In the next section we employ this framework to derive the leading corrections to holographic complexity within the generalized ``complexity = anything'' proposal.
\section{Generalized holographic complexity under $T\bar{T}$ deformation} \label{sec3}
We now extend the analysis of the CV proposal to the generalized ``complexity = anything'' framework. The goal is to determine how the finite cut-off $T\bar{T}$ deformation modifies a generic geometric complexity functional and to identify the corresponding leading corrections.

The complexity functional \ref{maxv} as a generalization of \ref{vexp} would be \footnote{It should be noted that in the following definition a  maximization procedure is implied, however, we do not write it unless it poses a confusion.}
\begin{align}\label{comdef}
\mathcal{C}^{T\bar{T}}_{Any}=\frac{1}{G_N}\int d^{d-1}y\int_{\rho_c}d\rho \sqrt{H} F_1.
\end{align}
The generalized complexity proposal allows the scalar density $F_1$ to be any local curvature invariant. Typical examples include
\begin{align}
&F_{C^2}=1+\alpha L^4C^2,\\
&F_R=1+\alpha L^4 R^2,\\
&F_{R^2}=1+\alpha L^4 R_{\mu\nu\rho\sigma}R^{\mu\nu\rho\sigma},
\end{align}
where $C^2$ is the Weyl squared, $R$ the Ricci scalar and $R^2$ the Riemann tensor squared. 

To evaluate the generalized complexity functional perturbatively, we now expand the bulk metric and curvature invariants near the cut-off surface using FG coordinates \cite{epjp2026,faraj}
\begin{align}
&\sqrt{H}=\frac{1}{2\rho^{\frac{d+1}{2}}}\Big(1+\frac{\rho}{2(d-2)} G_{nn}+(\rho-\rho_c)\frac{K^2}{2(d-1)}+...\Big),\\
&R_{\mu\nu}R^{\mu\nu}=d^2(d+1)+\rho A+\mathcal{O}(\rho^2),\\
&R_{\mu\nu\rho\sigma}R^{\mu\nu\rho\sigma}=2d(d+1)+\rho  B+\mathcal{O}(\rho^2),\\
&R=-\frac{d(d+1)}{l^2}+\rho R(g^{(0)})+...,\\
&C^2=\rho^2 \mathscr{C}+\rho^2Log\rho \mathcal{A}+\mathcal{O}(\rho^3),
\end{align}
As a matter of convenience, we denote the expansion of every curvature invariant (CI) as follows
\begin{align}
CI=M_0+\rho M_1+\rho^2 M_2+... .
\end{align}
Then, the complexity functional \ref{comdef} admits the following expansion \footnote{We set $L,l=1$ in the rest of the paper.}
\begin{align}\label{comdefex}
&\mathcal{C}^{T\bar{T}}_{Any}=\frac{1}{G_N}\int d^{d-1}y\int_{\rho_c}d\rho \frac{1}{2\rho^{\frac{d+1}{2}}}\Big(1+\frac{\rho}{2(d-2)} G_{nn}\nonumber\\
&+(\rho-\rho_c)\frac{K^2}{2(d-1)}+...\Big) \Big(1+\alpha (M_0+\rho M_1+\mathcal{O}(\rho^2))\Big),\\
&=\frac{\alpha\rho_c^{\frac{1-d}{2}}}{(d-1)G_N}\Bigg(\int_{\Sigma}d^{d-1}y\sqrt{h}(\frac{1}{\alpha}+M_0)-\frac{(d-1)\rho_c}{2(d-3)(d-2)}\int_{\Sigma}d^{d-1}y\sqrt{h}(\frac{1}{\alpha}+M_0)\Big(G_{nn}\nonumber\\
&-\frac{d-2}{(d-1)^2}K^2\Big)\Bigg)+\frac{\alpha\rho_c^{\frac{3-d}{2}}}{(d-3)G_N}\Bigg(\int_{\Sigma}d^{d-1}y\sqrt{h}M_1-\frac{(d-1)\rho_c}{2(d-5)(d-2)}\int_{\Sigma}d^{d-1}y\sqrt{h}M_1\Big(G_{nn}\nonumber\\
&-\frac{d-2}{(d-1)^2}K^2\Big)\Bigg),
\end{align}
and the complexity functional for the undeformed would be \cite{faraj, comment}
\begin{align}
&\mathcal{C}^{undeformed}_{Any}=\frac{1}{G_N}\int d^{d-1}y\int_{\rho_c} d\rho \frac{1}{2\rho^{\frac{d+1}{2}}}\Big(1+\frac{\rho}{2(d-2)} G_{nn}\nonumber\\
&+(\rho-\rho_c)\frac{(d-2) K^2}{2(d-1)}+...\Big) \Big(1+\alpha (M_0+\rho M_1+\mathcal{O}(\rho^2))\Big),\\
&=\frac{\alpha \rho_c^{\frac{1-d}{2}}}{(d-1)G_N}\Bigg(\int_{\Sigma}d^{d-1}y\sqrt{h}(\frac{1}{\alpha}+M_0)-\frac{(d-1)\rho_c}{2(d-3)(d-2)}\int_{\Sigma}d^{d-1}y\sqrt{h}(\frac{1}{\alpha}+M_0)\Big(G_{nn}\nonumber\\
&-\frac{(d-2)^2}{(d-1)^2}K^2\Big)\Bigg)+\frac{\alpha \rho_c^{\frac{3-d}{2}}}{(d-3)G_N}\Bigg(\int_{\Sigma}d^{d-1}y\sqrt{h}M_1-\frac{(d-1)\rho_c}{2(d-5)(d-2)}\int_{\Sigma}d^{d-1}y\sqrt{h}M_1\Big(G_{nn}\nonumber\\
&-\frac{(d-2)^2}{(d-1)^2}K^2\Big)\Bigg).
\end{align}
We have introduced the coefficients $M_0$ and $M_1$ through the expansion $F_1=M_0+\rho M_1$ which collect the leading and next-to-leading curvature contributions. Throughout this work we retain terms up to $\mathcal{O}(\rho)$, corresponding to the leading correction induced by the finite cut-off deformation. Higher-order terms are suppressed in the small cut-off regime $\rho_c\ll 1$. The same expansion could be written for the undeformed complexity \ref{undef}. Similar to the complexity of the deformation in \ref{subtract}, the complexity of the deformation in "complexity=anything" could be written as follows
\begin{align}
\delta \mathcal{C}_{Any}&=\mathcal{C}_{Any}^{undeformed}-\mathcal{C}_{Any}^{T\bar{T}}\\
&=\frac{\rho_c^{\frac{3-d}{2}}}{G_N}W_{\Sigma}^{0-cor.}+\frac{\rho_c^{\frac{5-d}{2}}}{G_N}W_{\Sigma}^{1-cor.},
\end{align}
where $W_{\Sigma}^{0-cor.}$ is the corrected Willmore energy at the first order and $W_{\Sigma}^{1-cor.}$ at the second order
\begin{align}
&W_{\Sigma}^{0-cor.}=\frac{\alpha}{2(d-1)^2}\int _{\Sigma} d^{d-1}y\sqrt{h}(\frac{1}{\alpha}+ M_0)K^2,\\
&W_{\Sigma}^{1-cor.}=\frac{\alpha}{2(d-1)(d-5)}\int _{\Sigma} d^{d-1}y\sqrt{h} M_1 K^2.
\end{align}
In general one can generalized the complexity of the formation as follows
\begin{align}\label{sectors}
\delta C_{Any}=\frac{1}{G_N}\sum_{n=0}^{N} \rho_c^{\frac{3+2n-d}{2}}W_{\Sigma}^{(n)},
\end{align}
where the generalized Willmore functionals are defined by
\begin{align}
&W_{\Sigma}^{(n)}=\frac{\alpha}{2(d-1)(d-1-4n)}\sum_I\int _{\Sigma} d^{d-1}y\sqrt{h}M_n^I K^2,\\
&F_1=\sum_{n,I} \rho^n M_n^I,
\end{align}
where the $I$-index labels different curvature invariants, $\{F_I: R^2, R_{\mu\nu}R^{\mu\nu}, ...\}$. The equation \eqref{sectors} constitutes the main result of this section. It shows that the deformation-induced correction organizes naturally into a hierarchy of generalized Willmore-type functionals, each weighted by a definite power of the cut-off parameter. The expansion \eqref{sectors} shows that the finite cut-off deformation modifies the generalized complexity through an infinite sequence of generalized Willmore-type functionals. Since each coefficient $M_n$ originates from the curvature dependence of $F_1$, the correction naturally decomposes into independent geometric sectors. This observation provides the starting point for the Lorentzian-thread interpretation developed in the following section.
\section{Interpretation in terms of Lorentzian threads}\label{sec4}

The generalized complexity functional derived in the previous section was obtained independently of a Lorentzian-thread description. However, its structure inherently allows for such an interpretation, which provides a geometric realization of the generalized complexity functional and indicates a natural connection with the multiple-flavor Lorentzian-thread framework proposed in~\cite{pedraza1}.

Within the conventional complexity-volume (CV) proposal, Lorentzian threads offer a dual description of the maximal bulk hypersurface via the Lorentzian min flow-max cut theorem~\cite{minmax,pedraza2}. The generalized complexity functional examined in this work differs from the standard CV proposal by incorporating curvature-dependent contributions. Because these contributions enter linearly in the generalized volume functional, it is pertinent to investigate whether the associated Lorentzian flow can be decomposed into independent sectors corresponding to individual curvature invariants.

This section pursues two objectives. First, the generalized complexity functional is reformulated in terms of Lorentzian threads after an appropriate conformal rescaling of the bulk geometry. Second, it is demonstrated that the linear structure of the generalized functional motivates a local decomposition of the corresponding flow into multiple sectors, thus providing a geometric realization of the multi-flavor thread framework. Nonetheless, the conditions under which there is a global decomposition remain open and will be discussed below.

\subsection{Lorentzian-thread formulation of the generalized complexity functional}

Let $\mathcal{M}$ be a compact, oriented $(d+1)$-dimensional Lorentzian manifold with boundary $\partial\mathcal{M}$. Let $A\subset\partial\mathcal{M}$ denote a boundary subregion with entangling surface
\[
\partial A=\sigma_A .
\]
We denote by $\Sigma$ a bulk codimension-one hypersurface homologous to $A$, while $\widetilde{\Sigma}$ represents the maximal-volume hypersurface anchored on $\sigma_A$.

A Lorentzian flow is a future-directed vector field
$v^\mu$ satisfying

\begin{align}
\nabla_\mu v^\mu=0,
\qquad
|v|\ge1,
\qquad
v^0>0.
\end{align}

The integral curves of $v^\mu$ define the Lorentzian bit threads. The Lorentzian min flow-max cut theorem states that the minimum flux of such a flow through the boundary region $A$ is equal to the maximal volume of $\widetilde{\Sigma}$~\cite{minmax}. Consequently, the CV complexity may be written equivalently as

\begin{align}
\mathcal{C}_{\rm CV}
=
\frac{1}{G_NL}
\min_{v}
\int_A
n_\mu v^\mu
=
\frac{1}{G_NL}
\max_{\Sigma\sim A}
\int_\Sigma
d^{\,d}\sigma\,
\sqrt{h},
\end{align}

where $n^\mu$ denotes the future-directed unit normal to the boundary hypersurface and $h$ is the determinant of the induced metric on $\Sigma$.

The generalized complexity functional takes the form

\begin{align}
\mathcal{C}_{\rm Any}
=
\frac{V_x}{G_NL}
\max_{\Sigma}
\int_\Sigma
d^{\,d}\sigma\,
\sqrt{h}\,
F_1,
\end{align}

where the scalar function $F_1$ contains the curvature invariants.

We introduce the conformally rescaled metric

\begin{align}
\widetilde g_{\mu\nu}
=
F_1^{2/d}
g_{\mu\nu},
\end{align}

for which the induced volume element satisfies

\begin{align}
d^{\,d}\widetilde{\sigma}\,
\sqrt{\widetilde h}
=
F_1\,
d^{\,d}\sigma\,
\sqrt h .
\end{align}

The generalized complexity functional therefore becomes

\begin{align}
\label{minmax}
\mathcal{C}_{\rm Any}
=
\frac{V_x}{G_NL}
\max_{\Sigma}
\int_\Sigma
d^{\,d}\widetilde{\sigma}\,
\sqrt{\widetilde h}.
\end{align}

Equation~(\ref{minmax}) has precisely the same form as the standard CV functional, except that it is evaluated with respect to the conformally rescaled geometry. Consequently, the Lorentzian min flow-max cut theorem applies without further modification, leading to the equivalent flow representation

\begin{align}
\mathcal{C}_{\rm Any}
=
\min_{\widetilde v}
\int_A
\widetilde v ,
\end{align}

where the rescaled flow satisfies

\begin{align}
\widetilde\nabla_\mu
\widetilde v^\mu
=
0,
\qquad
|\widetilde v|
\ge
1.
\end{align}

This result establishes that the generalized complexity functional admits a Lorentzian-thread description after the conformal rescaling determined by the generalized volume density.
\subsection{Decomposition into multiple flavor sectors}
The multiflavor thread formalism defines a collection of independent thread species \cite{pedraza2,pedraza3}
\begin{align}
{v_I^\mu}, \qquad I=1, ...,N,
\end{align}
which simultaneously satisfy the conservation equations
\begin{align}
\nabla_\mu v_I^\mu = 0,
\end{align}
together with a norm constraint of the form
\begin{align}
|v_I| \ge1.
\end{align}
In contrast to the single-flow Lorentzian construction, each flavor in the multiflavor formalism represents an independent information channel. Consequently, the multiflavor articulates competently multiple cases where multiple information streams exist and interact, well-suited to multipartite information and distributed information processing \cite{pedraza2}.

Within this framework, non-local computation is characterized by the cooperative optimization of all thread species, rather than the propagation of a single flow. A computational task corresponds to a correlated configuration of the vector fields $v_I$, with the norm constraint dictating how the various channels allocate spacetime information. Because no individual thread species contains the complete information, computation is inherently distributed across the network of interacting flows. The multiflavor construction thus affords a more comprehensive geometric description of non-local computation by enabling the simultaneous optimization of multiple correlated information channels, clarifying how global computing operations arise via the interplay among several thread networks.

The generalized complexity functional \eqref{maxv} \footnote{It is worth mentioning that the functional is linear but the extrimization process in the complexity is not linear.}, is linear in the curvature contributions entering the scalar function $F_1$. Explicitly, one may write

\begin{align}
F_1=\sum_{I=1}^{N}a_I\,F_I,
\label{Fdecomp}
\end{align}
where the coefficients $a_I$ are determined by the underlying gravitational theory and the functions $F_I$ denote the independent geometric invariants contributing to the generalized volume functional. In the present construction the index $I$ labels the different curvature sectors, although the following discussion is independent of their explicit form.

The linearity of~(\ref{Fdecomp}) naturally suggests that the Lorentzian flow associated with the generalized complexity functional may also admit a decomposition into independent sectors. Such a decomposition would provide a geometric realization of the multiple-flavor Lorentzian-thread picture introduced in~\cite{pedraza1}, where different thread species are interpreted as carrying distinct microscopic information.

Motivated by this observation, we seek a family of divergence-free vector fields $v_I$, whose sum reproduces the generalized Lorentzian flow,
\begin{align}
v=\sum_{I=1}^{N}v_I.
\label{vsum}
\end{align}

Each vector field $v_I$ is interpreted as a separate thread flavor associated with one of the curvature invariants appearing in the eq.\ref{Fdecomp} \footnote{We should note that in this construction we do not mean every positive curvature invariants induces a Lorentzian flow, but we construct a linear map $\mathcal{L}$, that associate every positive curvature invariant to a Lorentzian thread $\mathcal{L}(F_I)=v_I$, which is not a unique correspondence.}.

The existence of such a decomposition is not automatic because the divergence-free condition must simultaneously be preserved for every sector. Nevertheless, as we now demonstrate, one may explicitly construct the required family of flows.

To this end we introduce the weights

\begin{align}
w_I
=
\frac{a_I F_I}{F_1},
\qquad
\sum_{I=1}^{N}w_I=1,
\label{weights}
\end{align}
which measure the local fractional contribution of each curvature invariant to the generalized density. A naive decomposition,

\begin{align}
v_I=w_Iv,
\end{align}
correctly reproduces the eq.~(\ref{vsum}), but in general fails to satisfy the divergence-free condition since

\begin{align}
\nabla_\mu(w_Iv^\mu)
=
v^\mu\nabla_\mu w_I .
\end{align}

To restore incompressibility we therefore introduce compensating vector fields $X_I$ satisfying \footnote{Eq. \eqref{split} is similar to perturbation of the Lorentzian threads in \cite{pedraza3}; however, we should note that \cite{pedraza3} perturbs the background metric and investigates the correction to the Lorentzian thread; in our work, the background metric is fixed and we are looking for a family of Lorentzian thread perturbations that yield $v=\sum v_I$.}

\begin{align}
v_I=(w_I+\lambda_I)v+X_I,
\label{split}
\end{align}
as Fig.(\ref{fig1}) shows, with the additional constraint

\begin{align}
\sum_{I=1}^{N}X_I=0.
\label{usum}
\end{align}

The remaining task is to determine the fields $X_I$ so that every sector obeys

\begin{align}
\nabla_\mu v_I^{\mu}=0.
\end{align}

This leads naturally to the following proposition.
\begin{figure}[ht]
\centering
\begin{overpic}[width=0.7\textwidth]{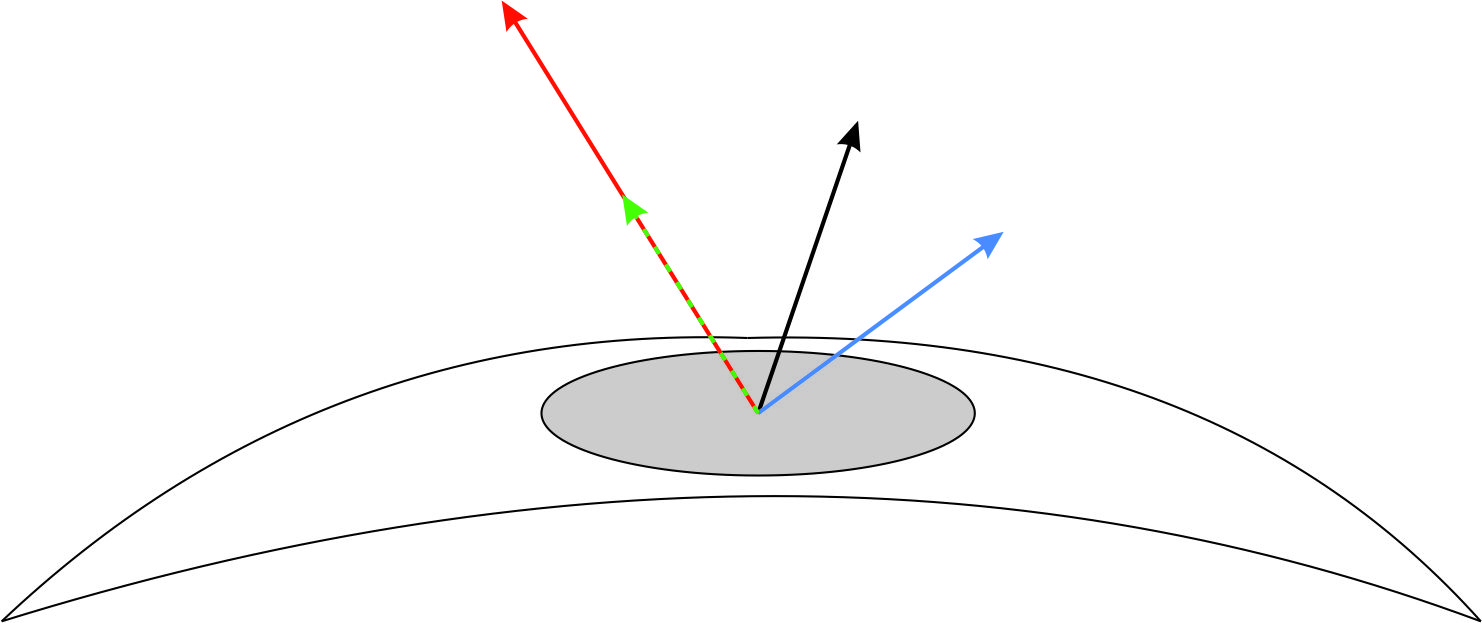}
\put(45,11){$p \in U$}
\put(65,29){\color{blue}$X_I \in v^{\perp}$}
\put(55,39){$v_I=(w_I+\lambda_I)v+X_I$}
\put(31,40){\color{red}$v$}
\put(22,25){\color{green}$(w_I+\lambda_I)v$}
\put(10,5){$\Sigma(\tau)$}
\end{overpic}
\caption{The local decomposition of the Lorentzian thread $v$ at $p\in U$.}\label{fig1}
\end{figure}
\subsection{Construction of the flavor flows}

The decomposition introduced above can be made precise by explicitly constructing a family of divergence-free vector fields satisfying the required local norm bounds.

\begin{prop}[Conditional global decomposition]\label{pro1}
Let $(M,g)$ be a Lorentzian manifold and let
$v\in\Gamma(T_M)$ be an optimal Lorentzian thread satisfying
\begin{equation}
\nabla_\mu v^\mu=0,
\qquad
|v|\ge 1 .
\end{equation}

Let
\begin{equation}
w_I=\frac{a_IF_I}{F},
\qquad
F=\sum_{I=1}^{N}a_IF_I,
\end{equation}
and let $\lambda_I$ be the solution of the equation
\begin{equation}\label{diffeq}
v^\mu\nabla_\mu(w_I+\lambda_I)=-\nabla X_I.
\end{equation}

Assume there exist defined smooth vector fields
$X_I\in\Gamma(T_M)$ satisfying
\begin{equation}
g(v,X_I)=0,
\qquad
\nabla_\mu X_I^\mu=-v\nabla(w_I+\lambda_I).
\end{equation}

Define
\begin{equation}
v_I=(w_I+\lambda_I)v+X_I.
\end{equation}

Then

\begin{enumerate}
\item each $v_I$ is divergence free,
\begin{equation}
\nabla_\mu v_I^\mu=0;
\end{equation}

\item if
\begin{equation}
\sum_{I=1}^{N}(w_I+\lambda_I)=f,
\qquad
\sum_{I=1}^{N}X_I=0,
\end{equation}
then
\begin{equation}\label{decom}
\sum_{I=1}^{N}v_I=f\,v;
\end{equation}

\item if
\begin{equation}
g(X_I,X_I)<(w_I+\lambda_I)^2|g(v,v)|,
\end{equation}
then every $v_I$ is timelike.
\end{enumerate}

\end{prop}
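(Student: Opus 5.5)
The three claims follow from direct computation once the hypotheses are taken as given. I would handle them in order.

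\emph{Claim 1 (divergence-free).} Apply the Leibniz rule to $v_I=(w_I+\lambda_I)v+X_I$:
\begin{align}
\nabla_\mu v_I^\mu=(w_I+\lambda_I)\nabla_\mu v^\mu+v^\mu\nabla_\mu(w_I+\lambda_I)+\nabla_\mu X_I^\mu .
\end{align}
The first term vanishes because $v$ is divergence free. The remaining two terms cancel by the assumed equation $\nabla_\mu X_I^\mu=-v^\mu\nabla_\mu(w_I+\lambda_I)$, which is just eq.~\eqref{diffeq} read in the other direction. So claim 1 is immediate.

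A remark on consistency rather than on the proof itself: if both equations hold simultaneously, then summing over $I$ and using $\sum_I X_I=0$ gives $v^\mu\nabla_\mu f=0$. Hence $f$ must be constant along the integral curves of $v$. This is compatible with claim 2, since then $fv$ is also divergence free.

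\emph{Claim 2 (sum).} Use linearity of the sum:
\begin{align}
\sum_I v_I=\Big(\sum_I(w_I+\lambda_I)\Big)v+\sum_I X_I=f\,v+0 ,
\end{align}
which is eq.~\eqref{decom}. Since $\sum_I w_I=1$ by eq.~\eqref{weights}, the natural choice that recovers eq.~\eqref{vsum} is $\sum_I\lambda_I=0$, giving $f=1$.

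\emph{Claim 3 (timelike).} Use the orthogonality $g(v,X_I)=0$. Expanding the norm, the cross term drops out:
\begin{align}
g(v_I,v_I)=(w_I+\lambda_I)^2g(v,v)+g(X_I,X_I).
\end{align}
Since $v$ is timelike with $|v|\ge1$, we have $g(v,v)=-|g(v,v)|<0$. The hypothesis $g(X_I,X_I)<(w_I+\lambda_I)^2|g(v,v)|$ then gives $g(v_I,v_I)<0$, so $v_I$ is timelike.

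I would also check that $X_I$ is automatically spacelike (or zero), because it lies in $v^\perp$ and $v$ is timelike. This makes the hypothesis meaningful, and it requires $w_I+\lambda_I\neq0$.

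Two further points would sit as remarks, not as part of the proof:
\begin{itemize}
\item Future-directedness of $v_I$ follows if in addition $w_I+\lambda_I>0$, since adding a vector orthogonal to $v$ does not change the time orientation of a timelike vector.
\item The norm bound $|v_I|\ge1$ would need a further quantitative assumption, for example $(w_I+\lambda_I)^2|v|^2-g(X_I,X_I)\ge1$.
\end{itemize}

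\emph{Where the difficulty lies.} None of the three steps is a real obstacle. The substantive issue is the \emph{existence} of $X_I$ and $\lambda_I$, which the proposition simply assumes; that is why it is labelled ``conditional.''

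If I tried to supply existence, I would first work locally in a flow-box chart adapted to $v$. There one can solve the transport equation~\eqref{diffeq} for $\lambda_I$ along integral curves with prescribed data on $\Sigma(\tau)$, and solve the divergence equation for $X_I$ within $v^\perp$, for instance as a gradient projected orthogonally to $v$.

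Globalizing this construction runs into two problems: integrability, namely the compatibility condition $v^\mu\nabla_\mu f=0$ found above, and boundary/flux conditions. I would leave those as the stated hypotheses.
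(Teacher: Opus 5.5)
Your proof is correct and matches the paper's argument step for step. For claim 1 you use the Leibniz rule with $\nabla_\mu v^\mu=0$ and the assumed divergence equation, for claim 2 linearity, and for claim 3 the vanishing cross term $g(v,X_I)=0$ in the norm expansion; your extra remarks go slightly beyond what the paper states but are all correct (the compatibility condition $v^\mu\nabla_\mu f=0$, $X_I$ being spacelike so the hypothesis forces $w_I+\lambda_I\neq0$, and the added conditions for future-directedness and $|v_I|\ge1$).
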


\begin{proof}

Using $\nabla_\mu v^\mu=0$, we obtain
\begin{align}
\nabla_\mu v_I^\mu
&=
\nabla_\mu\left((w_I+\lambda_I)v^\mu\right)
+\nabla_\mu X_I^\mu
\nonumber\\
&=
v^\mu\nabla_\mu(w_I+\lambda_I)
+(a_I+\lambda_I)\nabla_\mu v^\mu
+\nabla_\mu X_I^\mu
\nonumber\\
&=0.
\end{align}

The decomposition formula follows immediately from the assumptions
\(
\sum_i(a_i+\lambda_i)=f
\)
and
\(
\sum_iX_i=0.
\)

Finally,
\begin{equation}
g(v_I,v_I)
=
(w_I+\lambda_I)^2g(v,v)+g(X_I,X_I),
\end{equation}
since $g(v,X_I)=0$.
Hence the stated inequality guarantees
$g(v_I,v_I)<0$, so that each $v_I$ remains timelike.

\end{proof}
By choosing $w_I+\lambda_I=c_I$ where $c_I$ are constant then \eqref{diffeq} is automatically satisfied and as a result, $\sum c_I=f$ is constant, which means that \eqref{decom} is a rescaled decomposition of the Lorentzian flow $v$, where it could be normalized to unity. Furthermore, one could rescaled $X_I=\epsilon Y_I$, then by Proposition \ref{pro1}, $\nabla Y_I=0$. Consequently, $g(v_I,v_I)=(w_I+\lambda_I)^2|g(v,v)|+\epsilon^2g(Y_I,Y_I)$, by choosing
\begin{align}
\epsilon<\frac{(w_I+\lambda_I)\sqrt{|g(v,v)|}}{\sqrt{g(Y_I,Y_I)}},
\end{align}
the norm inequality is satisfied.

In the following we show that locally we could construct $X_I$ such that they are orthogonal to the Lorentzian flow $v$, satisfy $\sum X_I=0$.

\begin{prop}[Local orthogonal construction]\label{pro2}
Let $(M,g)$ be a Lorentzian manifold and let
$v$ be a smooth timelike Lorentzian thread.

Then, for every point $p\in M$, there exists a neighbourhood
$U\ni p$ together with a local orthonormal frame
\begin{equation}
\{e_A\}_{A=1}^{d-1}
\end{equation}
of the orthogonal bundle
\begin{equation}
v^\perp
=
\{X_I\in T_M\,:\,g(v,X_I)=0\},
\end{equation}
such that every orthogonal vector field admits the local decomposition
\begin{align}
&X_I=\sum_{A=1}^{d-1}(w_I-\bar w)\psi^{\,A}e_A ,\\
&\bar w=\frac{1}{N}\sum_I^N w_I=\frac{1}{N},
\end{align}
Moreover, the divergence-free condition
\begin{equation}
\nabla_\mu X_I^\mu=-v\nabla(w_I+\lambda_I)
\end{equation}
is equivalent to the following first-order linear system which is followed by a solution
\begin{equation}
\nabla_\mu\left((w_I-\bar w)\psi^{\,A}e_A^{\mu}\right)=v\nabla(w_I+\lambda_I).
\end{equation}

\end{prop}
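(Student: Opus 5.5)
We begin by fixing the orthogonal frame. Since $v$ is smooth and timelike near $p$, we normalise it to the unit timelike field $u=v/\sqrt{|g(v,v)|}$. The orthogonal complement $v^\perp$ is then a smooth spacelike subbundle of rank $d$ (one less than $\dim M=d+1$); the statement writes the frame index as $A=1,\dots,d-1$, and we will simply let $A$ run over a basis of the fibre. Choose any local coordinate frame $\{\partial_\mu\}$ on a neighbourhood $U\ni p$. Project it onto $v^\perp$ with
\[
P^\mu{}_\nu=\delta^\mu{}_\nu+u^\mu u_\nu .
\]
On $v^\perp$ the metric is positive definite, so Gram--Schmidt applies to the projected vectors. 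Shrinking $U$ so that they stay linearly independent, this gives a smooth local orthonormal frame $\{e_A\}$ of $v^\perp$.

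Any smooth section $X_I$ of $v^\perp$ over $U$ can then be written as $X_I=\sum_A\xi_I^A e_A$ with smooth coefficients. To reach the stated form $X_I=(w_I-\bar w)\psi^A e_A$, we make the ansatz $\xi_I^A=(w_I-\bar w)\psi^A$ with a flavour-independent $\psi^A$. Because the weights sum to one, $\bar w=\frac1N\sum_I w_I=\frac1N$. Then
\[
\sum_I X_I=\Big(\sum_I w_I-N\bar w\Big)\psi^Ae_A=0,
\]
so the constraint \eqref{usum} holds automatically, and $g(v,X_I)=0$ holds by construction. It should be stated honestly that this ansatz describes a particular family of admissible $X_I$ rather than literally every orthogonal field. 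The "every" in the statement is to be read as: every $X_I$ of this ansatz type, or more generally one with $\psi^A$ replaced by $\psi_I^A$ wherever $w_I\neq\bar w$.

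Next we impose the divergence condition of Proposition \ref{pro1}. We substitute the ansatz into $\nabla_\mu X_I^\mu=-v^\mu\nabla_\mu(w_I+\lambda_I)$. Expanding with the Leibniz rule gives
\[
(w_I-\bar w)\big(e_A(\psi^A)+\psi^A\nabla_\mu e_A^\mu\big)+\psi^A e_A(w_I)=-v(w_I+\lambda_I).
\]
This is a first-order linear PDE system for $(\psi^A,\lambda_I)$, which is the claimed equivalence up to the sign convention of the statement (one absorbs the sign into $\lambda_I$ or $\psi$). Since $\lambda_I$ enters only through $v(\lambda_I)$, we propose to solve it as follows. Fix $\psi^A$ arbitrarily, for instance $\psi^A=0$ or any smooth choice. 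Then treat the equation as a transport equation along the integral curves of $v$:
\[
v^\mu\nabla_\mu\lambda_I=-v(w_I)-\nabla_\mu\big((w_I-\bar w)\psi^Ae_A^\mu\big).
\]
Because $v$ is nonvanishing, the flow-box theorem gives coordinates on a possibly smaller $U$ in which $v=\partial_t$. The equation then integrates directly, $\lambda_I(t,x)=\lambda_I(0,x)+\int_0^t(\text{rhs})\,dt'$, with smooth initial data on a transversal slice such as $\Sigma(\tau)\cap U$. This produces local solutions, and these are consistent with \eqref{diffeq}. Note that $\sum_I(w_I+\lambda_I)=f$ need not be constant unless the initial data are chosen appropriately. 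Summing the equation over $I$ gives $v(f)=-\nabla_\mu\sum_IX_I^\mu=0$, so $f$ is constant along the flow, and it is constant overall if the initial data have constant sum.

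The main obstacle is interpretational rather than analytic. The claim "every orthogonal vector field admits the decomposition" is not literally true with a flavour-independent $\psi^A$: for example, it forces $X_I=0$ whenever $w_I=\bar w$. So the plan is to prove the existence part via the ansatz above and flag this restriction explicitly. The genuinely delicate analytic point is only the shrinking of $U$ needed for both the frame and the flow box, which is why the result is local; globalising would require a nonvanishing global frame of $v^\perp$ and complete flow lines, as in Proposition \ref{pro1}.
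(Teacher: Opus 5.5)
Your argument is correct, and it departs from the paper's at the one step with real analytic content.

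\textbf{Where the two arguments agree.} The paper gets the frame from local triviality of $v^\perp$. Your projection-plus-Gram--Schmidt construction is a concrete version of the same thing, and you are right that the rank is $d$, not $d-1$. Both arguments read off $\sum_IX_I=0$ from the ansatz in the same way.

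\textbf{Where they differ.} The paper asserts that every section of $v^\perp$ has a \emph{unique} expansion $X_I=(w_I-\bar w)\psi^Ae_A$. It then treats $h_I=-v(w_I+\lambda_I)$ as a prescribed source and solves $\partial_\mu(\sqrt{|g|}X_I^\mu)=\sqrt{|g|}h_I$ for $X_I$ by the Poincar\'e lemma. You reverse the roles: you fix $\psi^A$ and solve for $\lambda_I$ by integrating along the flow of $v$ in flow-box coordinates.

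\textbf{Why your ordering is more robust.} A Poincar\'e-lemma solution is just some vector field with the prescribed divergence. Nothing in that step keeps it in $v^\perp$ or of the common-$\psi^A$ form, so orthogonality and $\sum_IX_I=0$ are not actually secured by the field the paper produces. In your construction both hold automatically, and all the freedom is pushed into $\lambda_I$. This is exactly how $\lambda_I$ is introduced in Proposition~\ref{pro1}, through \eqref{diffeq}.

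\textbf{What each ordering buys.} The paper's ordering, once repaired, would let one prescribe $\lambda_I$ (for instance $w_I+\lambda_I=c_I$) and solve for $X_I$. The price is the compatibility condition $v(f)=\sum_I v(w_I+\lambda_I)=0$, which is needed for $\sum_IX_I=0$, plus imposing $g(v,X_I)=0$ by hand. In your version $\lambda_I$ is an output rather than an input. As you note, $f$ is constant only along flow lines unless the initial data have constant sum. The choice $\psi^A=0$ collapses to the trivial decomposition $X_I=0$ with $w_I+\lambda_I$ constant along $v$, which is essentially the paper's own remark following Proposition~\ref{pro1}.

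Your caveat about ``every'' is justified for the reason you give. The sign in the statement's last display is simply a typo (the paper's proof carries the minus sign), so there is nothing to absorb.
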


\begin{proof}

Since $v$ is timelike, the orthogonal complement
$v^\perp$ is a smooth vector bundle of rank $d-1$.
Every smooth vector bundle is locally trivial; therefore there exists
a local orthonormal frame
$\{e_A\}_{A=1}^{d-1}$.

Consequently, every section
$X_I\in\Gamma(u^\perp)$
admits the unique local expansion
\begin{equation}
X_I=\sum_{A=1}^{d-1}(w_I-\bar w)\psi^{\,A}e_A.
\end{equation}

Substituting this decomposition into
$\nabla_\mu X_I^\mu=-v\nabla(w_I+\lambda_I)$
gives
\begin{equation}
\nabla_\mu\left((w_I-\bar w)\psi^{\,A}e_A^{\mu}\right)
=-v\nabla(w_I+\lambda_I),
\end{equation}
which is a linear first-order system for the scalar functions
$(w_I-\bar w)\psi^{\,A}e_A^{\mu}$, for which local solutions exist. Suppose a local coordinates $x^\mu$, then the divergence equation is written as follows
\begin{align}
&\nabla X_I=\frac{1}{\sqrt{|g|}}\partial_\mu \Big(\sqrt{|g|}X_I^\mu\Big)=h_I,\\
&h_I:=-v\nabla (w_I+\lambda_I).
\end{align}
Hence, the following first-order equation
\begin{align}
\partial_\mu \Big(\sqrt{|g|}X_I^\mu\Big)=\sqrt{|g|}h_I,
\end{align}
has a local solution by Poincare lemma.

In addition,
\begin{align}
\sum_I X_I=\sum_A \psi^Ae_A\sum_I (w_I-\bar w)=0
\end{align}

\end{proof}

The preceding proposition reduces the construction of the orthogonal
perturbations to the local solvability of the first-order system
\begin{equation}
\nabla_\mu\left((w_I-\bar w)\psi^{\,A}e_A^{\mu}\right)=v\nabla(w_I+\lambda_I).
\end{equation}

Therefore, a global existence theorem for the coefficients
$(w_I-\bar w)\psi^{\,A}e_A^{\mu}$ immediately yields globally defined orthogonal
divergence-free vector fields $X_I$, and hence, by Proposition~1,
a global multiflavor Lorentzian-thread decomposition
\begin{equation}
v_I=(w_I+\lambda_I)u+X_I.
\end{equation}

Establishing sufficient geometric or analytic conditions ensuring the
global solvability of the above system remains an interesting open
problem.

Thus, the problem of constructing the perturbations are reduced to solving a linear first-order system for the scalar coefficients $(w_I-\bar w)\psi^{,A}e_A$. Local solvability follows from the standard theory of linear first-order partial differential equations. The existence of global solutions depends on the global geometry of the orthogonal bundle $v^\perp$, the topology of the bulk spacetime, and the prescribed boundary conditions. To the best of our knowledge, a general global existence theorem for the system
\begin{align}\label{glob}
g(v,X_i)=0,
\qquad
\nabla_\mu X_I^\mu=-v\nabla(w_I+\lambda_I),
\end{align}
under the sole assumptions that $v$ is an optimal Lorentzian thread, is not currently available in the literature. Consequently, the existence of such global perturbations should be regarded as an additional assumption or established separately for the class of spacetimes under consideration. However, the global solution exists if the orthogonal distribution $v^\perp$ is tangent to a foliation by spacelike hypersurface as Fig.(\ref{fig2}) depicts.

From a geometric perspective, each flavor may be regarded as carrying the contribution of a particular curvature invariant to the generalized volume density. The total thread configuration is obtained by superposing all flavor sectors, while the generalized complexity is determined solely by their combined flux. This provides a natural realization of the multiple-flavor Lorentzian-thread picture proposed in ~\cite{pedraza1}.

All in all, we showed that by choosing $w_I+\lambda_I=c_I$, where $c_I$ are constant, there is a local decomposition of the Lorentzian thread $v_I=c_Iv+X_I$, associated to the generalized complexity functional containing $F_1=\sum a_IF_I$, where $\nabla v_I=0$ and $|v_I|\ge1$.

\begin{figure}[ht]
\centering
\begin{overpic}[width=0.6\textwidth]{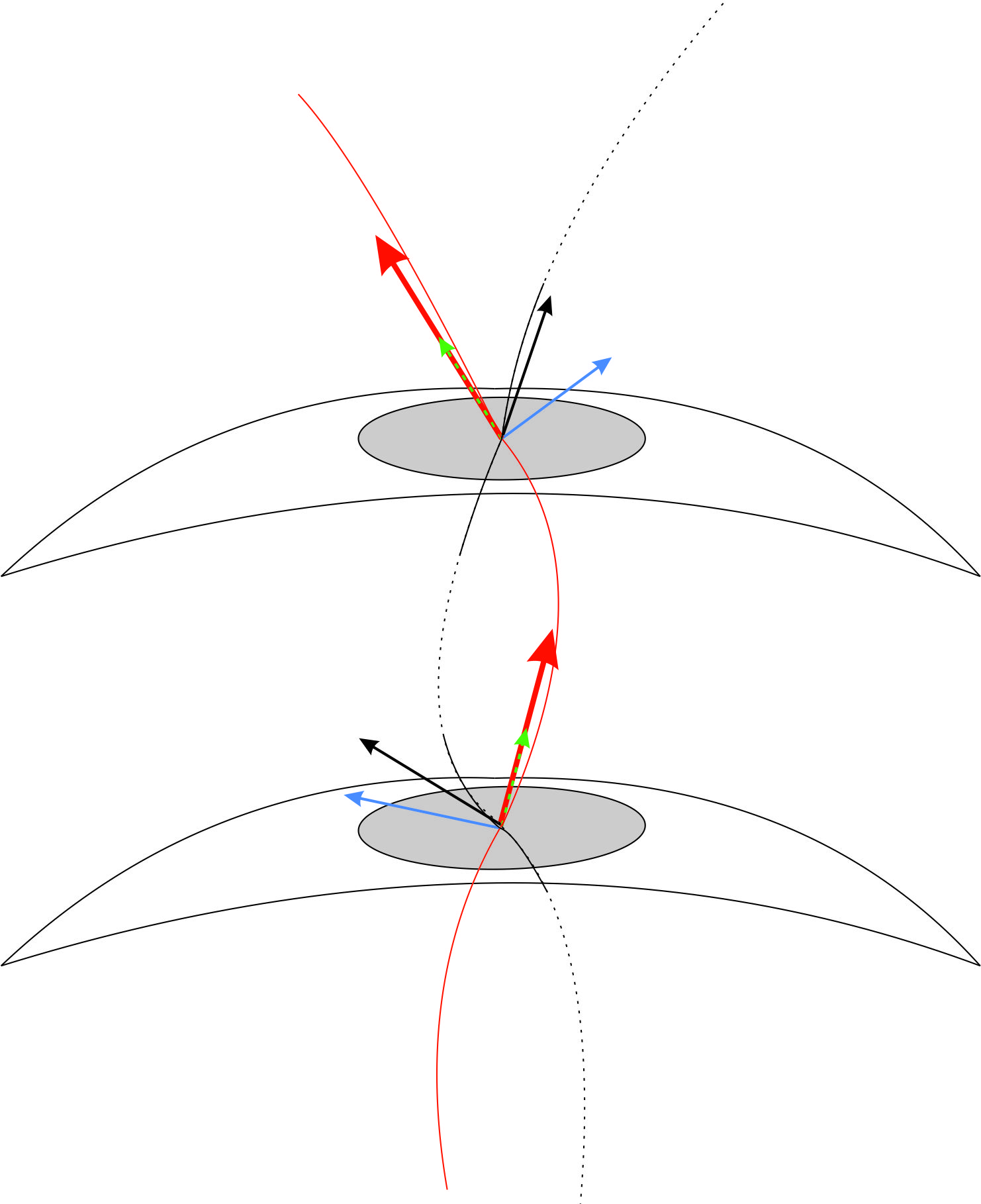}
\put(5,95){\color{red}$Lorentzian~thread~v$}
\put(58,90){$Global~Lorentzian~thread~v_I$}
\put(7,53){\fontsize{8}{10}\selectfont$Local~Lorentzian~thread~v_I$}
\put(0,29){$\Sigma(\tau_1)$}
\put(0,63){$\Sigma(\tau_2)$}
\end{overpic}
\caption{The local decomposition of the Lorentzian thread $v$ at two different boundary times $\tau$. The black dotted line shows the Lorentzian thread $v_I$, if there would be a global decomposition and the black solid line shows the local thread.}\label{fig2}
\end{figure}
\subsection{A proposed interpretation in terms of quantum circuit layers}
The construction presented above provides a natural geometric interpretation of the generalized complexity functional in terms of multiple-flavor Lorentzian threads, in which the decomposition of the total flow is not unique and is locally defined throughout spacetime (or at least within the relevant causal domain). Restricting this vector field to a particular maximal slice $\tilde{\Sigma}$ gives the thread configuration at that instant $v_I^{\mu}|_{\tilde{\Sigma}}$. As the parameter $\tau$ changes, the slice changes, so the restriction of the same global vector field changes.

In the standard complexity-volume proposal, Lorentzian threads may be interpreted as representing elementary computational resources, with the total flux of these threads measuring the complexity of the boundary state~\cite{pedraza2,pedraza3}. The generalized complexity functional extends this picture by allowing several independent thread sectors to coexist. Since every flavor is associated with a different geometric contribution, it is natural to regard the individual sectors as describing distinct classes of elementary operations contributing to the preparation of the boundary state.

This interpretation becomes particularly transparent when one considers the foliation of the bulk spacetime by maximal hypersurfaces. Each hypersurface defines an instantaneous generalized complexity by extremizing the generalized volume functional, while the corresponding Lorentzian flow yields the associated thread configuration. As the hypersurface evolves in boundary time, one obtains a one-parameter family of generalized flows $v_I(\tau)$ \footnote{It appears that in the complexity=anything proposal there are some transitions between extremized hypersurfaces, which means that there is $\tau$ such that two hypersurfaces extremize the complexity, but one of them maximizes the complexity. In this way, we should denote the flows by $v(\tilde \Sigma)$ rather than $v(\tau)$.}, where the label $\tau$ denotes the boundary time and $I$ labels the thread flavor.

Because the $T\bar{T}$ deformation introduces nonlocality, it is expected that the complexity \eqref{sectors} shows multiple flavor threads reformulation. Eq. \eqref{sectors} at the leading order would be
\begin{align}
&\delta \mathcal{C}_{Any}\sim \rho_c^{\frac{3-d}{2}}W^{(0)},\label{lead}\\
&W^{(0)}\sim \int d^{d-1}y \sqrt{h} M_0K^2.
\end{align}
At each time $\tau$ there is a maximized-extrimized hypersurface $\Sigma$. If at each time we define $\tilde{h}_{ab}(\Sigma)=M_0^{\frac{1}{d-1}}K^{\frac{2}{d-1}}_{\Sigma}h_{ab}(\Sigma)$, where $K^2_{\Sigma}$ is the extrinsic curvature of the extrimized hypersurface at time $\tau$, then the complexity \eqref{sectors} at the leading order can be written as
\begin{align}
\delta \mathcal{C}_{Any}\sim \max \int d\tilde{V}.
\end{align}
where $d\tilde{V}=d^{d-1}y \rho_c^{\frac{3-d}{2}}\sqrt{h} M_0^IK^2$. As a consequence, the complexity \eqref{sectors}, by min flow-max cut theorem would be written in terms of Lorentzian threads
\begin{align}
\delta \mathcal{C}_{Any}(\Sigma)\sim \min_{\tilde{v}} \int \tilde{v}(\Sigma)
\end{align}

Now, one supposes that $F_1=\sum_I a_IF_I$, then \eqref{lead} would be
\begin{align}\label{comfor}
\delta \mathcal{C}_{Any}\sim \rho_c^{\frac{3-d}{2}}\sum_I  \int d^{d-1}y \sqrt{h}a_IM_0^I K^2,
\end{align}
where the FG expansion leads to $F_1=\sum\limits_{I,n}a_I M_n^I$. Then, by the min flow-max cut and rescaling
\begin{align}\label{resc} 
d\tilde{V}=d^{d-1}y\rho_c^{\frac{3-d}{2}}\sqrt{h}\sum_Ia_IM_0^IK^2,
\end{align} 
the leading order of complexity would be
\begin{align}
\delta \mathcal{C}_{Any}\sim \max \int d\tilde V.
\end{align}
Like the previous section, the complexity \eqref{comfor} by rescaling \eqref{resc} is mapped to a maximization volume problem, where the min flow-max cut theorem shows that there is a Lorentzian thread reformulation for it where
\begin{align}
\delta \mathcal{C}_{Any}\sim \max \int d\tilde{V}=\min \int v.
\end{align}
By defining a weight function
\begin{align}
w_I=\frac{a_IM_0^I}{\sum_Ja_JM_0^J},
\end{align}
where by definition $\sum_Iw_I=1$, and $v_I=(w_I+\lambda_I)v+X_I$, the proposition \ref{pro2} has been proved leads to local $v_I$ as a flavor.

The successive maximal hypersurfaces therefore define a sequence of generalized thread configurations connecting the initial and final boundary states. We heuristically interpret each hypersurface as an individual layer of the quantum circuit, with different thread flavors describing independent channels through which geometric information contributes within that layer.

The above expression should not be regarded as an alternative definition of the generalized complexity but rather as a schematic representation of its decomposition into geometric sectors.

The interpretation proposed here is related to the multiple-flavor Lorentzian thread framework introduced in~\cite{pedraza1}. In that construction, different thread species are associated with distinct microscopic resources contributing to the complexity of the boundary state. The present analysis presents a geometric realization of this idea. Instead of introducing the flavors as independent ingredients, they emerge inherently from decomposing the generalized volume density into its curvature contributions. For visual description see Fig.(\ref{fig3}).

From this perspective, the generalized complexity functional contains more refined geometric information than the conventional complexity-volume proposal. The standard CV prescription is recovered when the generalized density reduces to a constant, in which case all flavor sectors merge into a single Lorentzian flow. Curvature invariants, on the other hand, separate the total flow into several independent components, each encoding the contribution of a particular geometric invariant.

Although the correspondence within flavor sectors and elementary gate classes remains heuristic, it provides an intuitive physical interpretation of the generalized Lorentzian-thread construction developed in this work. Establishing a precise microscopic identification between the thread flavors and specific gate sets would require a detailed understanding of the boundary quantum circuit and lies beyond the scope of the present analysis. Nevertheless, the geometric framework developed here suggests that generalized complexity naturally admits a richer internal structure than that described by the conventional complexity-volume proposal.

\begin{figure}[ht]
\centering
\begin{overpic}[width=0.6\textwidth]{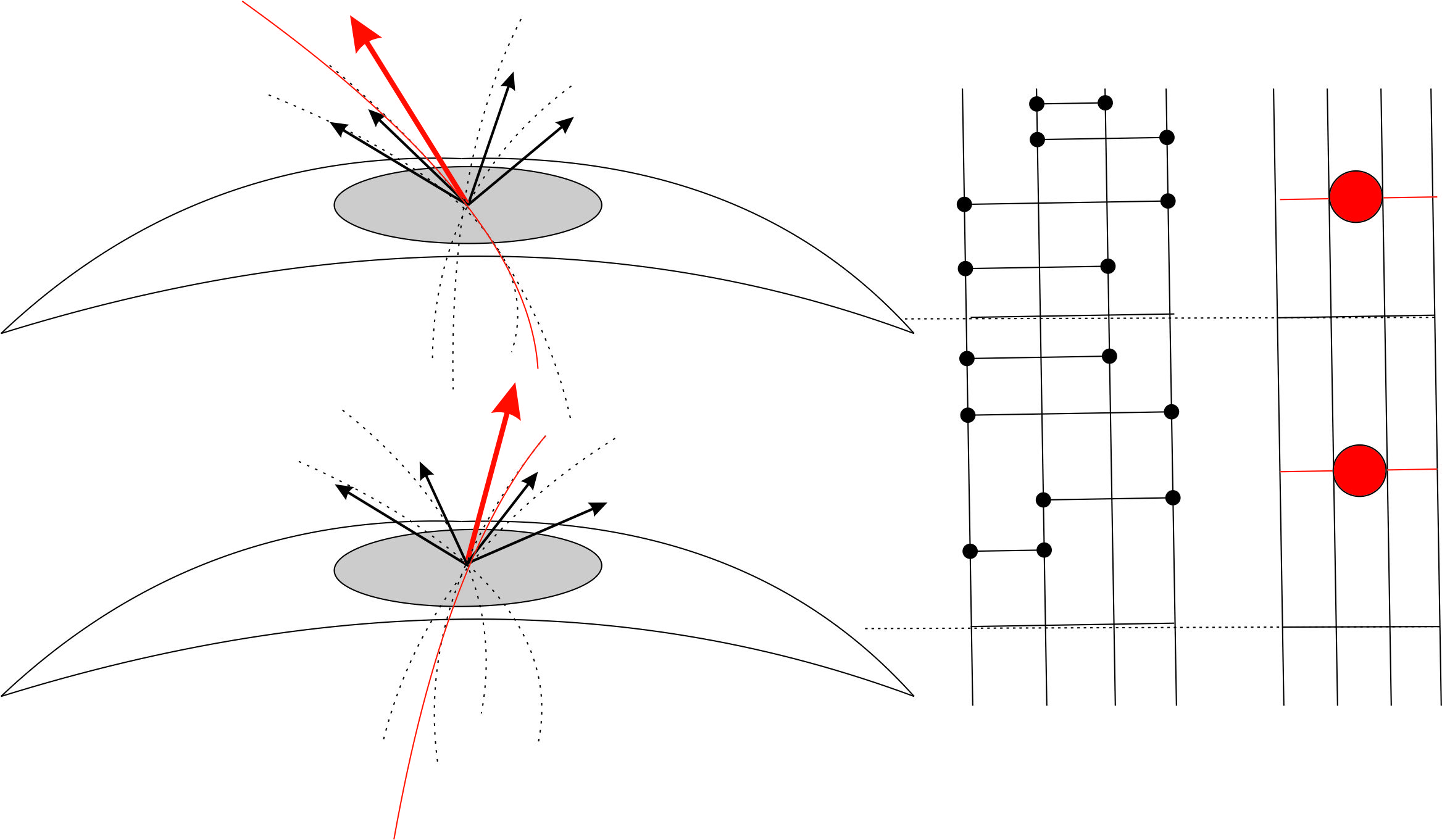}
\put(105,14){$\tau_1$}
\put(105,35){$\tau_2$}
\put(44,24){$v_I(\tau_1)$}
\put(42,49){$v_I(\tau_2)$}
\end{overpic}
\caption{The proposed interpretation suggests that at each hypersurface $\Sigma(\tau)$, it could be associated local multiple-Lorentzian threads as quantum gates as in \cite{pedraza2} in quantum circuit model.}\label{fig3}
\end{figure}

 \section{Conclusion}\label{con}
The impact of $T\bar T$-type deformations on generalized holographic complexity is investigated. Building on the proposed relationship between finite cut-off holography and generalized root-$T\bar T$ deformations, this analysis focuses on the complexity functional within the framework of the "complexity = anything" proposal and examines how the deformation modifies the associated geometric quantities.

A Fefferman-Graham expansion near the cut-off surface reveals the difference between deformed and undeformed complexities. The resulting correction is expressed as a series of generalized Willmore-type terms, with structure determined by the curvature invariants in the complexity functional. This approach offers a geometric characterization of how the deformation affects the complexity of the dual quantum state.

These corrections can be naturally interpreted in terms of Lorentzian threads. Specifically, generalized complexity decomposes into contributions from various geometric invariants, resulting in a framework with multiple thread types. In this context, the non-locality induced by the $T\bar T$ deformation appears as additional thread sectors, which may correspond to distinct classes of computational operations contributing to the preparation of the boundary state.

This analysis indicates a direct connection between non-local deformations, generalized holographic complexity, and the emerging thread-based description of bulk geometry. While the discussion remains primarily conceptual, it provides evidence that generalized complexity encodes information regarding the organization of non-local computational processes in the dual theory.

The present work demonstrates that generalized holographic complexity possesses an internal geometric structure encoded by curvature-dependent Lorentzian-thread sectors. Determining whether these sectors have a direct microscopic realization in the boundary theory remains an important direction for future research.

Several questions remain open. Further investigation of explicit finite cut-off gravitational backgrounds is needed to determine how thread decomposition depends on the deformation parameter. Exploring connections with other information-theoretic observables and developing a more microscopic understanding of the relationship between generalized complexity and non-local quantum circuits are also important. Additionally, proving a global existence theorem and identifying the conditions under which \eqref{glob} holds remain open problems.

\section*{Acknowledgment}
Authors extend their gratitude to Juan Pedraza for reading the manuscript and providing useful comments.


\end{document}